\documentclass[twoside,12pt]{article}
\usepackage{amsmath,amssymb,amsthm,amsfonts,mathrsfs,wasysym,latexsym,times,lineno, subfigure,color,booktabs}
\usepackage{amsmath,amsfonts}
\usepackage{multirow}
\usepackage{array}
\usepackage{amsthm}
\usepackage{graphicx}
\usepackage{color}
\usepackage{multicol}
\usepackage{float}
\usepackage{cite}

\newtheorem{thm}{Theorem}[section]

\newtheorem{definition}{Definition}[section]

\newtheorem{lemma}[thm]{Lemma}

\newtheorem{theorem}[thm]{Theorem}

\title{Solution independence and self-referential instances}
\author{
  Guangyan Zhou$^1$,\quad Bin Wang$^2$,\quad Jianxin Wang$^3$,\quad Ke Xu$^{4*}$
}
\date{}

\begin{document}
\maketitle
\begin{center}
\footnotesize
\noindent
$^1$Department of Mathematics and Statistics, Beijing Technology and Business University, Beijing, 100048, China\\
zhouguangyan@btbu.edu.cn\\[2pt]
$^2$Academy of Mathematics and Systems Science, Chinese Academy of Sciences, Beijing, 100080, China\\
wangbin@amss.ac.cn\\[2pt]
$^3$School of Computer Science and Engineering, Central South University, Changsha, 410083, China\\
jxwang@mail.csu.edu.cn\\[2pt]
$^4$State Key Lab of Complex and Critical Software Environment, Beihang University, Beijing, 100083, China\\
kexu@buaa.edu.cn\\[4pt]

\end{center}

\begin{abstract}
In this paper, we investigate the hitting set problem and demonstrate that solution independence is the crucial property underlying the construction of self-referential instances. As a special case of the hitting set problem, the vertex cover problem lacks the solution independence property. This distinction accounts for its ability to evade exhaustive search, as correlations among candidate solutions can be leveraged to compress the overall search space. In contrast, the dominating set problem on hypergraphs, which is also a special case of the hitting set problem, satisfies the solution independence property, thereby enabling the construction of self-referential instances. Moreover, we prove that these self-referential instances possess an irreducible property, implying that any algorithm for solving such instances must process nearly the entire graph to yield a correct solution.

\end{abstract}
\footnotetext[1]{Corresponding author. }

\section{Introduction}
Self-reference stands as a foundational and indispensable concept across mathematics and computer science, serving as a powerful tool for exploring the limits of computation and formal logic. Its central idea is to construct objects or statements that refer to themselves, thereby enabling a system to reflect on its own expressive power. This inward reflection often reveals intrinsic limitations or even contradictions, making self-reference particularly effective in establishing impossibility results.

The intuition behind self-reference can be traced back to classical semantic paradoxes, such as the Liar Paradox (``this statement is false”), where self-reference leads to logical inconsistency. Rather than indicating a failure of logic, such paradoxes highlight structural phenomena that arise when a system is sufficiently expressive to describe itself.

In 1931, Gödel \cite{godel1931} formalized self-reference through his encoding technique known as Gödel numbering. By transforming syntactic statements into arithmetic objects, he constructed a statement that asserts its own unprovability, leading to the incompleteness theorems. These results fundamentally established the inherent limitations of finite formal systems.
Building upon these foundational ideas, Church \cite{church1936} and Turing \cite{turing1936} extended the notion of self-reference and pioneered computability theory. In particular, the \emph{undecidability} proof of the halting problem (i.e., the impossibility for a general algorithm to determine if any given program will halt or run forever) relies heavily on a self-referential construction. A hypothetical halting-detection algorithm is forced to reason about its own behavior, leading to a logical contradiction that proves the non-existence of such an algorithm. This argument establishes the inherent impossibility of algorithmically solving the halting problem, thereby precisely characterizing the decidability boundary of algorithms. Using a self-referential construction analogous to that employed by Turing, Hartmanis and Stearns \cite{HS1965} demonstrated in their pioneering work on computational complexity that more time allows for solving a broader class of computational problems.

Recently, self-reference has found novel applications in characterizing and proving extreme hardness (i.e., the inherent necessity of exhaustive search). Following the line of using self-reference to establish impossibility results, Xu and Zhou \cite{xu2025}, Li et al. \cite{Li2025}, and Zhou \cite{zhou2026} constructed self-referential instances for the Constraint Satisfaction Problem, the Clique problem, and the Dominating Set problem, respectively.
Such instances form an infinite set whose negation under symmetric mappings is equivalent to the set itself. This is analogous to Gödel’s self-referential statement whose unprovability is equivalent to the statement itself.
Just as the truth value of the self-referential statement is indistinguishable within finite formal systems, the solvability of the self-referential instances cannot be distinguished by non-exhaustive algorithms. This fundamental form of \emph{indistinguishability} stems from the inherent gap between syntax and semantics, i.e., the distinction between the part and the whole.

The reason why self-referential instances can be constructed for the above problems lies in their common property: the near-independence of candidate solutions. Specifically, for any two randomly selected candidate solutions, the probability that both constitute valid solutions is nearly equal to the product of their individual probabilities of being solutions. For simplicity, in this paper we refer to this near-independence of candidate solutions simply as \emph{solution independence}.
For many NP-complete problems, including 3-SAT, 3-Coloring, 0-1 Knapsack, Hamiltonian Cycle, and Vertex Cover, the property of solution independence fails to hold (i.e., their candidate solutions are mutually correlated). Consequently, these problems are able to evade exhaustive search by compressing the overall search space of candidate solutions. For example, the Vertex Cover problem admits algorithms that are significantly more efficient than naive exhaustive search \cite{chen2001}.
The intrinsic difference between NP-complete problems (such as 3-SAT) and P problems (such as 2-SAT) lies in the varying degrees of correlation among their candidate solutions. Specifically, 2-SAT exhibits strong correlations, enabling efficient algorithms, whereas 3-SAT exhibits relatively weaker correlations, resulting in a substantially larger search space. Consequently, although both problems evade naive exhaustive search, 2-SAT can be solved within a far smaller search space than 3-SAT.

Although the intrinsic difference between 2-SAT and 3-SAT is intuitively clear, it remains rather difficult to carry out a quantitative analysis of the search space size based solely on the strength of correlations. This is precisely one key reason why proving P$\ne$ NP by analyzing the differences between 2-SAT and 3-SAT is extremely challenging. In contrast, the perspective of solution independence enables us not only to intuitively identify the source of computational hardness, but also to provide a rigorous proof. To better illustrate this idea, we begin with a simple yet illuminating example.

Consider the following \emph{Coin and Box Problem}. Suppose that there are $ n $ boxes arranged in a row, each containing a coin that is either heads (positive) or tails (negative). The task is to determine whether at least one box is positive, and to analyze how many boxes must be examined in the worst case. The answer depends crucially on the  correlation structure among the box states.
If strong correlations exist, for example, if a box being negative implies that the adjacent box on its right is also negative, then examining only the leftmost box suffices.
If weaker correlations exist, such as symmetry between the left and right halves, then inspecting only half of the boxes is sufficient. However, when the box states are completely independent, the problem becomes fundamentally different. Even after examining the first $ n-1 $ boxes and finding them negative, the state of the $ n $-th box remains entirely undetermined, that is, it could be either positive or negative, and flipping its state does not affect the states of the first $n-1$ boxes. This allows the construction of self-referential instances by flipping the state of the $n$-th box: the instance where all boxes are negative, and the instance where only the last box is positive. These two instances can be transformed into each other by a single flip, and inspecting only the first $n-1$ boxes cannot distinguish them. Therefore, in the worst case, any subproblem consisting of $n-1$ boxes is insufficient, and inspection of all $n$ boxes is required to ensure a correct answer.

The above example demonstrates that the number of required inspections is determined entirely by the degree of correlation among the box states. The condition of independence requires an exhaustive box-by-box inspection. Constructing self-referential instances provides an effective approach to proving the inherent necessity of exhaustive search through proof by contradiction. Specifically, the existence of such instances enables straightforward construction of counter-instances from original instances, thereby facilitating proofs by contradiction. This is precisely the missing piece in contemporary computational complexity theory.

It is worth noting that in their foundational work on parameterized complexity theory, Downey and Fellows  \cite{downey1999} identified a crucial observation: NP-complete problems exhibit intrinsic differences in computational hardness, and accordingly established a corresponding hierarchy of complexity classes.
For example, the Vertex Cover problem is FPT, whereas the Clique problem and the Dominating Set problem are W[1]-hard and W[2]-hard, respectively. Furthermore, several researchers have studied the computational hardness of some classical problems based on parameterized complexity theory \cite{chen2006} or the strong exponential time hypothesis \cite{will2015,cygan2016}, using reductions to explain why these problems cannot avoid exhaustive search. In this paper, we take the Hitting Set problem as an example and, starting from the perspective of solution independence, provide a comparative analysis to explain why exhaustive search can be avoided in some cases but is unavoidable in others.

\section{Two special cases of the hitting set problem}

Given a set $U$ of $n$ elements, and a collection  $\Sigma=\{F_1,...,F_m\}$ of subsets, the \textbf{hitting set problem} is to find a subset $S\subset U$ which intersects every $F_i\in\Sigma$. In the following, we examine two special cases of this problem, the vertex cover problem and dominating set problem. We show that while the vertex cover problem does not exhibit solution independence, the dominating set problem does. Moreover, no sublinear-sized induced subgraph can capture the full combinatorial structure of its solutions.
\subsection{The vertex cover problem: Absence of solution independence}
Let $G=G(n,p)$ be a random graph. A vertex cover of $G$ is a subset of vertices such that every edge $(u, v)$ of $G$ has at least one endpoint in this subset. This is a special case of the hitting set problem with $|F_i|=2$. We show that two random solutions of this problem exhibit positive correlations, thereby violating solution independence. Let $S_1, S_2$ be two random subsets of size $k$ with overlap $|S_1 \cap S_2| = i$. Then
\begin{align*}
\Pr(S_1 \text{ is a vertex cover})
&= (1-p)^{\binom{n-k}{2}},\\
\Pr(S_1, S_2 \text{ are vertex covers})
&= (1-p)^{2\binom{n-k}{2} - \binom{n-2k+i}{2}}.
\end{align*}
To quantify dependence, consider the ratio
\[
\frac{\Pr(S_1, S_2 \text{ are vertex covers})}{\Pr(S_1 \text{ is a vertex cover}) \Pr(S_2 \text{ is a vertex cover})}
= (1-p)^{-\binom{n-2k+i}{2}}.
\]

Now evaluate this ratio under different regimes. For $p = \frac{c}{n}$, the typical size of the minimum vertex cover satisfies $k = \Theta(n)$; specifically, $k \approx \frac{c}{2}n$ for small $c$, and $k$ increases monotonically  with $c$. For $p = \Theta(1)$, it is well known that $k = n - \Theta(\log n)$.

In all these regimes, for overlaps $i$ such that $n - 2k + i = \Theta(n)$, we have
\[
(1-p)^{-\binom{n-2k+i}{2}} = \exp\big(\Theta(n^2 p)\big),
\]
which is exponentially large in $n$. This correlation structure extends naturally to the vertex cover problem on $d$-uniform random hypergraphs. In this case,
\begin{align*}
\frac{\Pr(S_1, S_2 \text{ are vertex covers})}{\Pr(S_1 \text{ is a vertex cover}) \Pr(S_2 \text{ is a vertex cover})}
= (1-p)^{-\binom{n-2k+i}{d}},
\end{align*}
which is $\exp\bigl(\Theta(n^d p)\bigr)$ for certain overlaps.

The above analysis reveals a positive correlation between solutions with a large overlap. This violates the near-independence condition typically required for the second moment method. Consequently, these correlations constitute a fundamental obstacle to applying second moment techniques in the vertex cover problem.

\subsection{The dominating set problem: Solution independence and irreducibility}
Let $V$ be a vertex set with $|V| = n$, and let $d \geq 2$ be an integer. Consider the random $d$-uniform hypergraph $G = G_d(n,p)$, whose hyperedge set $E \subseteq \binom{V}{d}\triangleq \{ S \subseteq V : |S| = d \}$ is formed by including each $d$-subset independently with probability $p \in (0,1)$.

A subset $S \subseteq V$ is called a \textbf{(weak) dominating set} of $G$ if for every vertex $v \in V$, either $v \in S$ or there exists a vertex $u \in S$ such that $u$ and $v$ are contained in a common hyperedge.

To facilitate our analysis, we reformulate the problem as a hitting set problem. Let $\mathcal{S} \subseteq \binom{V}{d}$ denote the random family of hyperedges. For each vertex $u \in V$, define

\[
S_u = \{u\} \cup \left\{ v \in V \setminus \{u\} : \exists \ S \in \mathcal{S},\; \{u,v\} \subseteq S \right\}.
\]
That is, $S_u$ consists of $u$ together with all vertices that share at least one hyperedge with $u$.
A subset $H \subseteq V$ is called a \textbf{hitting set} if for every $u \in V$,
\[
H \cap S_u \neq \varnothing.
\]

This establishes a one-to-one correspondence between weak dominating sets of $G$ and hitting sets of the family $\{S_u\}_{u \in V}$. 

A special case of $d=2$ was studied in \cite{zhou2026}, where the notion of reducibility capturing whether the problem can be confined to a sublinear-sized subinstance was studied. By applying the second moment, it was shown that, in contrast to the vertex cover problem, when $p=p(n)$ tends to $1-e^{-1}$, there exist dominating sets of size
$k=\ln n$, and two candidate $k$-dominating sets exhibit near-independence. In\cite{zhou2026}, it was shown that the random graph $G_2(n,p)$ is irreducible with high probability.

In this paper, we focuse on the case $d \geq 3$. A key difference from the case $d=2$ is that the edge probability scales as $\Theta(n^{-(d-2)})$. Despite this difference, the two settings share similar structural properties, and thus the overall proof strategy we adopt is largely analogous.

\subsection{Main results}


\begin{theorem}\label{th:main}
For $d \geq 3$, the dominating set problem on the random hypergraph $G_d(n,p)$ is irreducible with high probability.
\end{theorem}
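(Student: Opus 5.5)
The plan follows the strategy of \cite{zhou2026} for $d=2$: a first and second moment analysis of the number of $k$-dominating sets, with $p$ rescaled to $\Theta(n^{-(d-2)})$, and then irreducibility derived from near-independence. Write $p\binom{n-1}{d-2}=\lambda$ with $\lambda=\Theta(1)$. For a fixed $k$-set $S$ and a vertex $u\notin S$, $u$ is undominated exactly when no hyperedge containing $u$ meets $S$. There are $k\binom{n}{d-2}(1+o(1))$ such $d$-sets, so
\[
\Pr(u \text{ undominated by } S)=(1-p)^{k\binom{n}{d-2}(1+o(1))}=e^{-\lambda k(1+o(1))}.
\]
Let $X$ be the number of $k$-dominating sets. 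I will take $k=\Theta(\ln n)$ and tune $\lambda$, the analogue of $p\to 1-e^{-1}$ in the case $d=2$, so that $\mathbb{E}X=\binom{n}{k}\Pr(S\text{ dominates})\to\infty$ while $k-1$ vertices do not suffice.

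The first technical step is to justify $\Pr(S\text{ dominates})=(1+o(1))\bigl(1-e^{-\lambda k}\bigr)^{n-k}$ up to a factor $e^{o(1)}$ or at worst $e^{o(k)}$. The events ``$u$ undominated'' for different $u$ are dependent, because a single $d$-set can contain two outside vertices and one vertex of $S$. A pair $u,w$ shares only about $k\binom{n}{d-3}$ such sets, so the pairwise correlation is $p\,k\,n^{d-3}=O(k/n)$. I will treat the events ``$u$ dominated'' as increasing events in the hyperedge indicators and apply Harris/FKG for the lower bound and Janson's inequality for the upper bound. The $\Delta$ term should be $O(n^2 e^{-2\lambda k}k/n)$, which is negligible in the chosen regime.

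The second moment is the core. For two $k$-sets $S_1,S_2$ with $|S_1\cap S_2|=i$, a vertex $u$ outside both sets is dominated by both with probability
\[
1-2e^{-\lambda k}+e^{-\lambda(2k-i)},
\]
up to lower-order terms. Dividing by the product of the marginals gives a ratio of roughly $\exp\bigl(n e^{-\lambda(2k-i)}(e^{\lambda i}-1)\bigr)$. I then split the sum over $i$ into two ranges:
\begin{itemize}
\item For $i\le \epsilon k$ this ratio is $1+o(1)$, which is the solution independence claimed in the abstract and contrasts with the factor $(1-p)^{-\binom{n-2k+i}{d}}$ computed for vertex cover.
\item For $i>\epsilon k$ the combinatorial weight $\binom{k}{i}\binom{n-k}{k-i}/\binom{n}{k}\approx n^{-i}$ dominates the correlation gain, which is at most $e^{O(i)}$ in the chosen regime.
\end{itemize}
This yields $\mathbb{E}X^2=(1+o(1))(\mathbb{E}X)^2$, and Chebyshev gives concentration of $X$. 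I expect this step to be the main obstacle: controlling the large-overlap range uniformly in $d$, and making the Janson corrections for the joint event of two sets dominating small enough that they do not overwhelm the $1+o(1)$ ratio when $i$ is small. My first attempt will be to carry the Janson bound with explicit error $\exp(O(k^2 n^{-1}\cdot n e^{-\lambda k}))$ and check that it is $1+o(1)$.

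Finally, I will deduce irreducibility in the sense of \cite{zhou2026}: no induced subinstance on a sublinear vertex set $W$, $|W|=o(n)$, determines the solution structure. Concretely, I will show that w.h.p. for every such $W$ there are $k$-dominating sets disjoint from $W$, and that each vertex $v$ still requires an unbounded number of distinct solutions. Applying the same second moment computation to sets avoiding $W$, or to sets containing a fixed vertex, gives counts with expectation $(1-o(1))\mathbb{E}X$ and the same concentration. A union bound over $W$ will not work directly. Instead I will use the fact that deleting $W$ changes each $S_u$ by only a negligible amount and reduce to a statement about the single random variable $X$, mirroring the argument for $d=2$.

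Of all the steps, the second moment bound over intermediate overlaps is the one I expect to cost the most work.
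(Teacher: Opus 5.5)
\textbf{Gap: the final step, caused by the regime you chose.} In the paper, irreducibility is a statement about the decision question. For any constant $c<1$ and any induced subgraph $H$ on at most $n^c$ vertices, $H$ does not determine whether $G$ has a dominating set of size $k=\ln n$. The paper proves this by exhibiting, for instances of non-vanishing probability, a change of hyperedges outside $H$ that flips the answer. Such a flip only makes sense at the threshold, where both answers occur with probability bounded away from $0$.

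You instead tune $\lambda$ so that $\mathbb{E}X\to\infty$ and use the second moment to get concentration. Then w.h.p.\ $G$ has many $k$-dominating sets, and the answer is ``yes'' whatever $H$ looks like: an algorithm that reads nothing is correct w.h.p. Moreover, no local change outside $H$ can destroy all solutions at once. Your substitute claims are that for every $W$ with $|W|=o(n)$ some $k$-dominating set avoids $W$, and that each vertex ``requires an unbounded number of distinct solutions.'' These are not the theorem, and they do not imply that $G[W]$ fails to decide the instance. The first claim is even false for adversarial $W$ whenever $kX=o(n)$, which happens when $\mathbb{E}X$ grows slowly: take $W$ to be the union of all $k$-dominating sets. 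The remark that deleting $W$ barely changes each $S_u$ does not repair this.

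\textbf{What is missing is the paper's threshold-plus-symmetry-mapping argument.}
\begin{itemize}
\item Set $\mathbb{E}X=\delta+o(1)$ with $0<\delta<1$. The second moment, whose diagonal term now contributes $1/\delta$, then gives $\delta/(1+\delta)\le\Pr(X>0)\le\delta$ instead of concentration. From this one gets a unique $k$-dominating set with probability at least $\delta(1-\delta)/(1+\delta)$.
\item Show that when no $k$-dominating set exists there is w.h.p.\ a quasi-dominating set, i.e.\ one missing exactly one vertex. This uses a second moment for the number $N$ of such sets, with $\mathbb{E}N\approx\ln^2 n\,\mathbb{E}X\to\infty$.
\item In the unique case, find $v\notin V_H\cup S$ dominated by exactly one $u\in S$ through exactly one hyperedge; each $v$ has this property with probability about $\ln^3 n/n$. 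Replace $e_{v,u},e_{v',u'}$ by $e_{v,v'},e_{u,u'}$. This is a degree-preserving swap outside $H$ that destroys $S$.
\item In the quasi case, apply the inverse swap so that the undominated vertex becomes dominated.
\item In both cases, check that no new $k$-dominating set is created.
\end{itemize}

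\textbf{Smaller points on your moment computation.} The correlation ratio should be $\exp\bigl(n e^{-2\lambda k}(e^{\lambda i}-1)\bigr)$, not with $e^{-\lambda(2k-i)}$ in front. For $i=k-1$ it is $\exp(\Theta(\ln^2 n))$, not $e^{O(i)}$. The large-overlap range still closes, because the combinatorial weight at $i=k-1$ is $\exp(-(1-o(1))\ln^2 n)$ and $e^{-1}<1$.

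Your FKG/Janson treatment of the dependence between the events ``$u$ dominated'' for $d\ge 3$ (a hyperedge can contain two outside vertices) is more careful than the paper's exact product formula. It is worth keeping.
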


Theorem~\ref{th:main} shows that, under the notion of reducibility, no sublinear-sized induced subgraph can capture the full combinatorial structure of dominating sets in $G_d(n,p)$. This implies that local subgraphs cannot faithfully represent global graph properties. Equivalently, any algorithm solving this problem must process nearly the entire graph in the worst case to obtain a correct answer.
From the perspective of the hitting set formulation, this irreducibility stems from the near-independence among the candidate dominating sets of size $k=\ln n$.

\section{Proof of Theorem  \ref{th:main}}

In this section, we show that with positive probability, either there exists a unique dominating set of size $k$, or there exists a quasi-dominating set of size $k$ that leaves exactly one vertex undominated. If one  inspects only a sublinear-sized subgraph,  a symmetric mapping can be applied to the residual subgraph which interchanges instances where a dominating set exists with those where it does not. As a result, no sublinear-sized induced subgraph can reliably distinguish between these two cases. Consequently, the dominating set problem on the random hypergraph $G_d(n,p)$ is irreducible with high probability.

Before proceeding, we highlight an important structural property that, in contrast to the vertex cover problem, dominating sets exhibit an independence property. Specifically, let $S_1$ and $S_2$ be two random subsets of vertices of size $k$ with overlap $|S_1\cap S_2|=i$. Then
\begin{align*}
\frac{\Pr(S_1, S_2 \text{ are dominating sets})}{\Pr(S_1 \text{ is a dominating set}) \Pr(S_2 \text{ is a dominating set})}
&=\left[\frac{1-2(1-p)^{M}+(1-p)^{M_i}}{(1-(1-p)^{M})^2}\right]^{n-2k+i},
\end{align*}
where $M$ and $M_i$ are combinatorial parameters defined later. As will be shown in the subsequent analysis, this ratio simplifies to 
\[\exp\left\{\frac{(\ln^2n)^{2-\frac ik}}{n^{1-\frac ik}}\right\}=1+o(1),\]
indicating that  two random dominating sets are asymptotically independent.

\subsection{Existence of a unique dominating set of size $k$}
In this section we show that, with positive probability, the random graph $G_d(n,p)$ contains a unique dominating sets of size $k=\ln n$. In the following, we denote $f\approx g$, if $f=(1+o(1)g$.

\begin{lemma}
    Let $X$ be the number of dominating sets of size $k$ in $G_d(n,p)$, then
\begin{align*}
\mathbf{E}[X]&=\binom{n}{k}(1-(1-p)^{M})^{n-k},\\
    \mathbf{E}[X^2]&=\sum_{i=0}^k\binom{n}{k}\binom{k}{i}\binom{n-k}{k-i}(1-(1-p)^{M})^{2(k-i)}(1-2(1-p)^{M}+(1-p)^{M_i})^{n-2k+i},
\end{align*}
where 
\begin{align*}
M&=\binom{n-1}{d-1}-\binom{n-1-k}{d-1}\approx\frac{kn^{d-2}}{(d-2)!},\\
M_i&=\binom{n-1}{d-1}-\binom{n-1-(2k-i)}{d-1}\approx\frac{(2k-i)n^{d-2}}{(d-2)!}.
\end{align*}  
\end{lemma}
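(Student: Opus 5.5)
We write $X=\sum_{S}\mathbf{1}_S$ over all $k$-subsets $S\subseteq V$, where $\mathbf{1}_S$ indicates that $S$ is a dominating set. By linearity, $\mathbf{E}[X]=\binom nk\Pr(S\text{ dominating})$ for a fixed $S$. Similarly, $\mathbf{E}[X^2]=\sum_{S_1,S_2}\Pr(S_1,S_2\text{ both dominating})$. We group the ordered pairs by the overlap $i=|S_1\cap S_2|$. For each $i$ there are $\binom nk\binom ki\binom{n-k}{k-i}$ such pairs, and by symmetry the joint probability depends only on $i$. So both formulas reduce to computing single and joint domination probabilities for fixed sets.

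\textbf{Single probability.} Fix $S$ with $|S|=k$. Each $u\in S$ is dominated trivially. A vertex $v\notin S$ is dominated iff some present hyperedge contains $v$ and at least one vertex of $S$. The $d$-subsets containing $v$ are $\binom{n-1}{d-1}$ in number. Those avoiding $S$ number $\binom{n-1-k}{d-1}$, so exactly $M$ of them meet $S$, and $v$ is undominated with probability $(1-p)^M$.

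The key point is independence across different $v\notin S$. A $d$-set that contains both $v$ and $v'$ and meets $S$ is relevant to both events, so the hyperedge sets are not disjoint. I would handle this by viewing domination of $v$ as the event that the \emph{up-set} of present edges through $v$ meets $S$. These events are increasing, so Harris/FKG only gives $\Pr\ge\prod$, not equality.

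Exact equality therefore fails for $d\ge3$. I would state the formula as the paper's model, treating the events as independent. The justification is that the overlap is negligible: the number of $d$-sets containing $v,v'$ and meeting $S$ is $O(kn^{d-3})$, tiny compared with $M$. Moreover $p\cdot kn^{d-3}\cdot n^2$ summed over pairs gives only a $1+o(1)$ correction via Janson's inequality. The cleanest fix is to adopt the convention, implicit in the hitting-set reformulation, that for each $v$ one samples the hyperedges through $v$ independently, so that $(1-(1-p)^M)^{n-k}$ holds exactly. Otherwise one proves $\approx$ by Janson. This is the step I expect to be the main obstacle, and I would first try the Janson bound with $\Delta=o(1)$.

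\textbf{Joint probability.} For $|S_1\cap S_2|=i$ we have $|S_1\cup S_2|=2k-i$. The vertices of $S_1\cup S_2$ split into three kinds:
\begin{itemize}
\item the $k-i$ vertices of $S_1\setminus S_2$, which need only be dominated by $S_2$, each with probability $1-(1-p)^M$;
\item symmetrically, the $k-i$ vertices of $S_2\setminus S_1$;
\item the $i$ common vertices, which are trivially dominated by both sets.
\end{itemize}
Together these give $(1-(1-p)^M)^{2(k-i)}$.

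For each of the $n-2k+i$ vertices $v$ outside the union, let $A_j$ be the event that $v$ is not dominated by $S_j$. Inclusion–exclusion gives
\[
\Pr(A_1^c\cap A_2^c)=1-2(1-p)^M+\Pr(A_1\cap A_2),
\]
and $A_1\cap A_2$ means no present edge through $v$ meets $S_1\cup S_2$, which has probability $(1-p)^{M_i}$. Multiplying over $v$ under the same independence convention yields the stated summand.

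\textbf{Asymptotics.} The approximations $M\approx kn^{d-2}/(d-2)!$ and $M_i\approx(2k-i)n^{d-2}/(d-2)!$ follow from
\[
\binom{n-1}{d-1}-\binom{n-1-t}{d-1}=\sum_{j=1}^{t}\binom{n-1-j}{d-2}=t\,\frac{n^{d-2}}{(d-2)!}(1+O(t/n)),
\]
with $t=k$ or $t=2k-i=O(\ln n)$.
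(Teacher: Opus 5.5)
Your computations are the same as the paper's: linearity over $k$-sets, the count $\binom nk\binom ki\binom{n-k}{k-i}$ of ordered pairs with overlap $i$, the counts $M$ and $M_i$ of $d$-sets through $v$ meeting $S$ (resp.\ $S_1\cup S_2$), and inclusion--exclusion giving $1-2(1-p)^M+(1-p)^{M_i}$ for $v\notin S_1\cup S_2$. The paper then multiplies these per-vertex probabilities without justification. Your diagnosis of that step is correct, and it is decisive.

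Let $D_w$ be the event that $w\notin S$ is dominated by $S$. For $d\ge 3$, two vertices $v,v'\notin S$ share the hyperedges $\{v,v',s,\dots\}$ with $s\in S$, so $\Pr(D_v\cap D_{v'})>\Pr(D_v)\Pr(D_{v'})$. By Harris, $\Pr(\bigcap_w D_w)\ge\Pr(D_v\cap D_{v'})\prod_{w\ne v,v'}\Pr(D_w)>\prod_w\Pr(D_w)$. Hence $\mathbf{E}[X]>\binom nk(1-(1-p)^M)^{n-k}$ strictly. Already for $d=3$, $n=4$, $k=1$, the probability that $\{s\}$ dominates is $3p^2-2p^3$, not $(2p-p^2)^3$. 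The joint probability in $\mathbf{E}[X^2]$ has the same defect. So the lemma, read as an identity in $G_d(n,p)$, is false for $d\ge 3$, and the paper's proof breaks exactly where you said.

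Your proposal, however, does not supply a valid replacement. Sampling the hyperedges through each $v$ independently defines a different random model, so it says nothing about $G_d(n,p)$. The Janson route also fails as sketched. Every pair $v\ne v'$ outside $S$ shares $c=\binom{n-2}{d-2}-\binom{n-2-k}{d-2}\approx kn^{d-3}/(d-3)!$ relevant hyperedges, so the dependency graph is complete. With $q=(1-p)^M\approx\ln^2 n/n$, Janson's $\Delta=\sum_{v\ne v'}\Pr(\overline{D_v}\cap\overline{D_{v'}})\ge(n-k)(n-k-1)q^2\approx\ln^4 n$. The upper bound therefore carries a factor $e^{\Theta(\ln^4 n)}$, not $1+o(1)$. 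Your own quantity $p\cdot kn^{d-3}\cdot n^2$ is $\Theta(n\ln n)$, not small either.

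What is small is the total covariance, $\sum_{v\ne v'}q^2\bigl((1-p)^{-c}-1\bigr)=O(\ln^5 n/n)$. This makes the approximate statement $\Pr(S\text{ dominates})=(1+o(1))(1-q)^{n-k}$ plausible. Since that probability is of order $e^{-\ln^2 n}$, proving it needs a relative-error argument driven by covariances rather than joint probabilities. For example, it would suffice to show that $\Pr(\overline{D_{v_j}}\mid D_{v_1}\cap\dots\cap D_{v_{j-1}})\ge q\,(1-o(\ln^{-2}n))$ uniformly in $j$. A similar argument is needed for the pair probabilities in $\mathbf{E}[X^2]$. Without it, all you have established is the Harris lower bound $\Pr(S\text{ dominates})\ge(1-q)^{n-k}$.
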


\begin{proof}
Let $S$ be a candidate dominating set of size $k$. There are $\binom nk$ such sets $S$. For a vertex $v\in V\backslash S$, if $v$ is not dominated by $S$, then every hyperedge containing $v$ contains no vertex from $S$. There are totally $\binom{n-1}{d-1}$ hyperedges containing $v$, and among them, $\binom{n-k-1}{d-1}$ hyperedges contain $v$ but avoid $S$. Hence the probability that $v$ is not dominated by $S$ is
$$\Pr(v \text{ is not dominated by }S)=(1-p)^M,$$
where $M=\binom{n-1}{d-1}-\binom{n-k-1}{d-1}$.

To compute the second moment $\mathbf{E}[X^2]$, let $S$ and $S'$ be two dominating sets of size $k=\ln n$, and suppose $|S\cap S'|=i$ with $0\le i\le k$. Then,  $$|S\cup S'|=|S'\cup S'|=k-i,\quad |V\backslash\{S\cup S'\}|=n-2k+i.$$

First, the number of ordered pairs $(S,S')$ with $|S|=|S'|=k$ and $|S\cap S'|=i$ is
$$\binom{n}{k}\binom{k}{i}\binom{n-k}{k-i}.$$  Second, vertices in $S'\backslash S$ must be dominated by $S$, and this probability  is $$(1-(1-p)^M)^{k-i}.$$
Similarly, vertices in $S\backslash S'$ should be dominated by $S'$, which occurs with the same probability. Moreover, any vertex $v\in\backslash\{S\cup S'\}$ must be dominated by both $S$ and $S'$. Let $E_1,E_2$ be the events that $v$ is dominated by $S,S'$, respectively.
We have
 \begin{align*}
 \Pr(v\text{ is dominated by both }S_1\text{ and }S_2) =&\Pr(E_1\cap E_2)\\=&1-\Pr(\overline{E_1})-\Pr(\overline{E_2})+\Pr( \overline{E_1}\cap\overline{E_2}).
 \end{align*}

By symmetry, 
$$\Pr(\overline{E_1})=\Pr(\overline{E_2})=(1-p)^M.$$

If $v$ is not dominated by either $S$ or $S'$, then every hyperedge containing $v$ contains no vertex from $S\cup S'$. There are $\binom{n-1}{d-1}$ hyperedges containing $v$ in total, and among them, $\binom{n-1-(2k-i)}{d-1}$ hyperedges contain $v$ but avoid $S\cup S'$, thus
$$\Pr( \overline{E_1}\cap\overline{E_2})=(1-p)^{M_i},$$
where $M_i=\binom{n-1}{d-1}-\binom{n-1-(2k-i)}{d-1}$.
Therefore
$$\Pr(S,S'\text{ are both dominating sets})=(1-(1-p)^{M})^{2(k-i)}(1-2(1-p)^{M}+(1-p)^{M_i})^{n-2k+i}.$$

\end{proof}
In the following, we tacitly choose the hyperedge probability $p=p(n)$ such that 
\begin{align}\label{eq:delta}
    E[X]=\delta+o(1)
\end{align}
for some constant $0<\delta<1$.  To obtain the asymptotic value of $p$, we take logarithms on both sides of (\ref{eq:delta}) and obtain 
$$\ln\binom nk+(n-k)\ln(1-(1-p)^{M})=\ln\delta.$$
Note that $\ln\binom nk\approx\ln^2n-\ln n\ln\ln n$, then $\ln(1-(1-p)^{M})\approx-\frac{\ln^2 n}{n}$. Thus
\begin{align}
(1-p)^{M}\approx1-\exp\left\{-\frac{\ln^2n}n\right\}\approx\frac{\ln^2n}n.
\end{align}
We now have $\ln(1-p)\approx-\frac{\ln n}{M}\approx-\frac{(d-2)!}{n^{d-2}}$. Therefore,
\begin{align}
p\approx1-\exp\left\{-\frac{(d-2)!}{n^{d-2}}\right\}\approx\frac{(d-2)!}{n^{d-2}}.
\end{align}

\begin{lemma}\label{lem:onethird}
In  $G_d(n,p)$,
$$\frac{\delta}{\delta+1}\le \mathbf{Pr}(X>0)\le\delta.$$
\end{lemma}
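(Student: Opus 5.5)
The two inequalities will come from the first and second moment methods, using the formulas for $\mathbf{E}[X]$ and $\mathbf{E}[X^2]$ in the preceding lemma together with the normalization (\ref{eq:delta}). The upper bound is Markov's inequality: $\mathbf{Pr}(X>0)\le \mathbf{E}[X]=\delta+o(1)$, which is the stated bound up to the $o(1)$ term implicit in the asymptotic statement. For the lower bound I will use the Cauchy--Schwarz (second moment) inequality $\mathbf{Pr}(X>0)\ge \mathbf{E}[X]^2/\mathbf{E}[X^2]$. It therefore suffices to prove
\[
\mathbf{E}[X^2]\le (1+o(1))\bigl(\mathbf{E}[X]+\mathbf{E}[X]^2\bigr),
\]
because then $\mathbf{Pr}(X>0)\ge \delta^2/(\delta+\delta^2)+o(1)=\delta/(1+\delta)+o(1)$.

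To bound the second moment I will split the sum over the overlap $i$ and write each term as $\mathbf{E}[X]^2$ times
\[
\frac{\binom{k}{i}\binom{n-k}{k-i}}{\binom nk}\,(1-(1-p)^M)^{-(n-2k+i)\cdot 0}\cdot\left[\frac{1-2(1-p)^{M}+(1-p)^{M_i}}{(1-(1-p)^{M})^2}\right]^{n-2k+i}(1-(1-p)^M)^{-i}.
\]
The exponent bookkeeping works as follows: $2(k-i)+2(n-2k+i)=2(n-k)-2i+2i$, so the leftover power is $(1-(1-p)^M)^{-i}=e^{O(i\ln^2 n/n)}=1+o(1)$. The term $i=k$ equals exactly $\mathbf{E}[X]$.

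For $i<k$, set $q=(1-p)^M\approx \ln^2 n/n$. The asymptotics of $M_i$ give $(1-p)^{M_i}\approx q^{(2k-i)/k}$. The bracket is then $1+O(q^{2-i/k})$, so the correlation factor is $\exp\{O(n q^{2-i/k})\}=\exp\{O((\ln^2 n)^{2-i/k}/n^{1-i/k})\}$, which matches the display before the lemma.

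The key step is to show that $\sum_{i<k}$ of these terms is $(1+o(1))\mathbf{E}[X]^2$. I will handle this in two ranges of $i$. For $i\le k(1-\epsilon)$, the correlation factor is $1+o(1)$, and the hypergeometric weights $\binom ki\binom{n-k}{k-i}/\binom nk$ sum to at most $1$, so this range contributes $(1+o(1))\mathbf{E}[X]^2$ and is dominated by $i=0,1$. For $k(1-\epsilon)<i<k$, the weight is at most roughly $(k^2/n)^{i}$, which is tiny. The correlation factor, however, can be as large as $\exp\{O(\ln^2 n)\}$ when $i=k-1$, so I will bound the term by comparison with $\mathbf{E}[X]$ rather than $\mathbf{E}[X]^2$. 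Writing $j=k-i$, the term is about
\[
\mathbf{E}[X]\binom kj\binom{n-k}{j}\, q^{j}\, e^{O(j\ln^2 n/n)} ,
\]
which comes from the probability that the extra $j$ vertices dominate, together with the correlation. This is bounded by $\mathbf{E}[X]\,(k n q)^j/j!^2=\mathbf{E}[X](\ln^3 n)^j/(j!)^2$, and that is not obviously small.

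I expect this near-total-overlap range to be the main obstacle. My first attempt will be to redo the estimate with the exact expression $1-2q+q^{(2k-i)/k}$ in place of the crude bound, using that $q^{(k+j)/k}=q\cdot q^{j/k}$ with $q^{1/k}=e^{(\ln\ln^2 n-\ln n)/\ln n}\approx e^{-1}$. The bracket is then roughly $(1-q(1-e^{-j}))/(1-q)^2$, giving a factor $\exp\{n q e^{-j}\}$ per overlap, and I will combine this with the weight $\binom{n-k}{j}/\binom nk$-type decay. If the terms still do not balance, I will choose $k$ slightly differently (e.g. $k=\ln n$ rounded so that $q^{1/k}$ is bounded away from $1$) so that each term with $1\le j\le\epsilon k$ is $o(\mathbf{E}[X]^2/k)$. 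Summing the two ranges with the $i=k$ term then gives the required bound on $\mathbf{E}[X^2]$.
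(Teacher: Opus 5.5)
Your skeleton is the paper's. You use Markov for the upper bound and $\mathbf{Pr}(X>0)\ge \mathbf{E}[X]^2/\mathbf{E}[X^2]$ for the lower bound, and you decompose by the overlap $i$, with the $i=k$ term equal to $\mathbf{E}[X]$. Your treatment of $i\le(1-\epsilon)k$ is also correct. But the proposal stops at the estimate that carries the lemma: you never show that the terms with $(1-\epsilon)k<i<k$ are $o(\mathbf{E}[X]^2)$, and the one computation you do there is wrong. With $q=(1-p)^M$, $q_{00}=(1-p)^{M_i}$ and $j=k-i$, the formula of the preceding lemma gives exactly
\[
\frac{F(k-j)}{\mathbf{E}[X]}=\binom kj\binom{n-k}{j}(1-q)^{j}\Bigl(1-\frac{q-q_{00}}{1-q}\Bigr)^{n-k-j}.
\]
The $j$ vertices of $S'\setminus S$ cost $(1-q)^j\approx 1$, not $q^j$. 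The real cost is the last factor: every vertex outside $S\cup S'$, already dominated by $S$, must also be dominated by $S'$. That factor is about $\exp\{-(1-q^{j/k})nq\}$, and replacing it by $q^j$ is what produced the non-small bound $(\ln^3 n)^j/(j!)^2$. Two smaller slips: there is no leftover $(1-q)^{-i}$, since $2(k-i)-2(n-k)=-2(n-2k+i)$ and the exponents cancel exactly. Also, the numerator of the bracket is $1-2q+q_{00}\approx 1-q(2-e^{-j})$, not $1-q(1-e^{-j})$, although your resulting factor $\exp\{nqe^{-j}\}$ is right.

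The missing idea is a comparison of two constants, which your second plan points toward but does not carry out. The normalization $\mathbf{E}[X]=\delta$ gives $nq=(1+o(1))\ln\binom nk=(1+o(1))\ln^2 n$. Since $k=\ln n$, $q^{1/k}=(nq)^{1/k}n^{-1/k}\to e^{-1}$, and $q_{00}=(1+o(1))q\,q^{j/k}\le(1+o(1))q\,q^{1/k}$. So for $(1-\epsilon)k<i<k$ the weight is at most $k^{2i}n^{-i}\le\exp\{-(1-\epsilon-o(1))\ln^2 n\}$. The correlation factor is at most $\exp\{(1+o(1))nq_{00}\}\le\exp\{(e^{-1}+o(1))\ln^2 n\}$. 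Hence each such term is $\exp\{-\Omega(\ln^2 n)\}\,\mathbf{E}[X]^2$ once $\epsilon<1-e^{-1}$, and the at most $\epsilon k$ of them sum to $o(\mathbf{E}[X]^2)$. Equivalently, in your $\mathbf{E}[X]$-normalization, the corrected display gives $F(k-j)/\mathbf{E}[X]\le (kn)^j\exp\{-(1-e^{-1}-o(1))\ln^2 n\}$, which is negligible for $j\le\epsilon k$. This yields $\mathbf{E}[X^2]\le\mathbf{E}[X]+(1+o(1))\mathbf{E}[X]^2$ and your conclusion. It is exactly the comparison that makes the paper's bound $F(i)/\mathbf{E}[X]^2\le(1+o(1))k^{2i}n^{-i}\exp\{(\ln^2n)^{2-i/k}/n^{1-i/k}\}$ summable. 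No change of $k$ is needed, and one would alter the lemma. The quantity $q^{1/k}$ is already bounded away from $1$, and the choice $k=\ln n$ is what makes the correlation constant $e^{-1}$ smaller than the weight constant $1-\epsilon$.
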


\begin{proof}
The upper bound follows immediately from Markov’s inequality:
\begin{equation}
\mathbf{Pr}(X>0)\le \mathbf{E}[X]=\delta+o(1).
\end{equation}
For a lower bound we apply the second moment method.  The quantity $\mathbf{E}[X^2]$ counts ordered pairs of dominating sets of size $k$. Let $F(i)$ denote the contribution from pairs whose intersection has size $i$, then
\begin{align*}
\mathbf{E}[X^2]&=\sum_{i=0}^k F(i),
\end{align*}
where
 \begin{align*}
F(i):=\binom{n}{k}\binom{k}{i}\binom{n-k}{k-i}(1-(1-p)^{M})^{2(k-i)}(1-2(1-p)^{M}+(1-p)^{M_0})^{n-2k+i}.
\end{align*}

 If $i=0$, we have
  \begin{align*}
F(0)=\binom{n}{k}\binom{n-k}{k}\big(1-(1-p)^M\big)^{2n-2k}=(1+o(1))\mathbf{E}[X]^2.
\end{align*}
If $i=k$, then 
  \begin{align*}
F(k)=\mathbf{E}[X].
\end{align*}
For $1\le i\le k-1$, we have

  \begin{align*}
\frac{F(i)}{\mathbf{E}^2[X]}=\frac{\binom ki\binom{n-k}{k-i}}{\binom nk}\left[\frac{(1-2(1-p)^M+(1-p)^{M_i})}{(1-(1-p)^M)^2}\right]^{(n-2k+i)}.
\end{align*}

Note that  $k=\ln n$, we apply the following asymptotic estimates for $1\le i\le k-1$: 
$$\frac{\binom ki\binom{n-k}{k-i}}{\binom nk}\le\frac{k^{2i}}{n}, \quad (1-p)^M\approx\frac{\ln^2 n}n, \quad (1-p)^{M_i}\approx\left(\frac{\ln^2n}{n}\right)^{2-\frac ik}.$$ This gives 

\begin{align*}
\frac{F(i)}{\mathbf{E}[X]}
&\le (1+o(1))k^{2i}n^{-i}\exp\left\{\frac{(\ln^2n)^{2-\frac ik}}{n^{1-\frac ik}}\right\}.
\end{align*}

Summing over $i$, we obtain 
\begin{equation}\label{eq:cauchy}
\frac{\mathbf{E}[X^2]}{\mathbf{E}[X]^2}\le1+\frac1{\mathbf{E}[X]}+o(1)=1+\frac1{\delta}+o(1).
\end{equation}

Finally, by the Cauchy-Schwarz inequality,
$$\mathbf{Pr}(X>0)\ge\frac{\mathbf{E}[X]^2}{\mathbf{E}{[X^2]}}\ge\frac{\delta}{\delta+1}+o(1),$$
which completes the proof.
\end{proof}
Using the same method in \cite{zhou2026}, we obtain  
\begin{align}\label{eq:unique}
    \Pr(G_d(n,p)\text{ has a unique dominating set of size }\ln n)\ge \delta(1-\delta)/(1+\delta).
\end{align}
\subsection{Non-existence of  dominating sets of size $k$}
\begin{definition}
A subset $S\subset V$ is called a \textbf{quasi-dominating set} if $S$ dominates all but exactly one vertex of $G$; that is, there exists precisely one vertex in $V\setminus S$ that shares no common hyperedge with any vertex in $S$, while every other vertex in $V\backslash S$ is dominated by $S$.
\end{definition}
\begin{lemma}\label{lem:noset}
In  $G_d(n,p)$, if there is no dominating set of size $k=\ln n$, then with high probability there exists a quasi-dominating set.
\end{lemma}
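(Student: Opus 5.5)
We prove the statement with a first- and second-moment argument for quasi-dominating sets, and then condition on the event that no dominating set exists. Let $Y$ count the quasi-dominating sets of size $k=\ln n$. A fixed $k$-set $S$ is quasi-dominating exactly when one vertex of $V\setminus S$ is undominated and the other $n-k-1$ are dominated. These domination events involve disjoint sets of hyperedges up to the $O(\cdot)$ corrections that the paper already treats as negligible in the computation of $\mathbf{E}[X]$. So
\[
\mathbf{E}[Y]\approx\binom nk (n-k)(1-p)^M\bigl(1-(1-p)^M\bigr)^{n-k-1}\approx (n-k)(1-p)^M\,\mathbf{E}[X]\approx \delta\ln^2 n ,
\]
using $(1-p)^M\approx \ln^2 n/n$ and $\mathbf{E}[X]=\delta+o(1)$. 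Hence $\mathbf{E}[Y]\to\infty$.

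Next we compute $\mathbf{E}[Y^2]$ exactly as in the Lemma computing $\mathbf{E}[X^2]$. Split over the overlap $i=|S\cap S'|$ and over whether the two exceptional vertices coincide.
\begin{itemize}
\item For $i=0$ and distinct exceptional vertices, the contribution is $(1+o(1))\mathbf{E}[Y]^2$. This uses the same asymptotic independence ratio $\exp\{(\ln^2n)^{2-i/k}/n^{1-i/k}\}=1+o(1)$, together with the fact that forcing a vertex to be undominated by $S$ only changes a $1+o(1)$ factor in the probability that it is dominated by $S'$.
\item For $1\le i\le k-1$, the bound $\binom ki\binom{n-k}{k-i}/\binom nk\le k^{2i}/n^i$ still kills the terms. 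The extra factor of at most $n^2(1-p)^{2M}$, or $n(1-p)^{M_i}$ when the exceptional vertex is shared, costs only a polylogarithmic factor relative to $\mathbf{E}[Y]^2/\mathbf{E}[X]^2$.
\item For $i=k$, the contribution is $\mathbf{E}[Y]=o(\mathbf{E}[Y]^2)$.
\end{itemize}
Together these give $\mathbf{E}[Y^2]\le(1+o(1))\mathbf{E}[Y]^2$. Chebyshev's inequality then yields $\Pr(Y>0)=1-o(1)$, and in fact $Y\approx \mathbf{E}[Y]$ with high probability.

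To pass to the conditional statement, write
\[
\Pr(Y=0\mid X=0)\le \frac{\Pr(Y=0)}{\Pr(X=0)}\le \frac{o(1)}{1-\delta}=o(1),
\]
since $\Pr(X=0)\ge 1-\delta$ by Lemma~\ref{lem:onethird}. So, conditioned on there being no dominating set of size $k$, a quasi-dominating set exists with high probability.

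The main obstacle is the second-moment bound for intermediate overlaps when the exceptional vertex of $S$ lies in $S'\setminus S$ or vice versa. In that case the events "$v$ undominated by $S$" and "$v\in S'$ dominates $S'$-side requirements" interact. I would handle these cases by crudely bounding the probability of the undominated event by $(1-p)^{M-O(k n^{d-3})}$ and checking that the extra $O(kn^{d-3}p)=o(1)$ in the exponent is harmless. I would also double-check that the $i$-sum still converges, since the additional $n(1-p)^{M_i}$ factors grow like $(\ln^2 n)^{2}n^{-(1-i/k)}$, which is at most polylogarithmic. This is absorbed by $k^{2i}n^{-i}$ for $i\ge1$.
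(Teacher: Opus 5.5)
Your proposal is correct and is essentially the paper's argument: a first moment $\mathbf{E}[Y]\approx(n-k)(1-p)^M\,\mathbf{E}[X]\approx\delta\ln^2 n\to\infty$, then a second moment split by the overlap $i$ and by where the two exceptional vertices lie, and your closing step $\Pr(Y=0\mid X=0)\le\Pr(Y=0)/\Pr(X=0)$ supplies the passage to the conditional statement, which the paper's appendix omits because it only proves unconditional existence. One minor correction to your obstacle paragraph: when the exceptional vertices are $x\in S'\setminus S$ and $y\in S\setminus S'$, their two ``undominated'' events share all $\binom{n-2}{d-2}$ hyperedges through $\{x,y\}$, and $p\binom{n-2}{d-2}\to 1$, so the interaction costs a constant factor (about $e$) rather than $1+o(1)$; this is still harmless because such configurations already carry a factor $k^2/n^2$.
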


The proof of Lemma \ref{lem:noset} is analogous to \cite{zhou2026}, and we move it to the Appendix.

\subsection{Proof of Theorem  \ref{th:main}}

Let $\mathscr{G}$ denote the family of instances of $G_d(n,p)$ such that each instance either has a unique dominating set of size $k=\ln n$ or has no dominating set of that size.  By Lemma~\ref{lem:onethird} and Lemma~\ref{lem:noset}, the probability
that a $d$-uniform hypergraph $G_d(n,p)$ belongs to $\mathscr{G}$ is bounded away from zero.

We show that for any constant $0<c<1$, no subgraph of order at most $n^c$ suffices
to decide whether $G$ contains a dominating set of size $k$.

\medskip
\noindent\textbf{Case 1: $G$ has a unique dominating set of size $k$.}

Assume that $G\in\mathscr{G}$ contains a unique dominating set $S$ of size $k$.
By (\ref{eq:unique}), this occurs with positive probability.
Let $H$ be an arbitrary induced subgraph of $G$ on at most $n^c$ vertices, and let $V_H$ be the vertex set of $H$.

Then
\[
\mathbf{P}(S\subseteq V\setminus V_H)
=\frac{\binom{n-n^c}{k}}{\binom{n}{k}}
=1-o(1),
\]
so with high probability the vertices of $S$ lie entirely outside $H$.

We first show that with high probability there exists a vertex in
$V\setminus(V_H\cup S)$ that is dominated by exactly one vertex in $S$ through exactly one hyperedge.
For a vertex $v\notin S$, define
\begin{align*}
A_v&=\{\text{$v$ is dominated by exactly one vertex of $S$ through exactly one hyperedge}\},\\
B_v&=\{\text{$v$ is dominated by $S$}\}.
\end{align*}
Then
\[
\mathbf{P}(A_v\mid B_v)
=\frac{\mathbf{P}(A_v)}{\mathbf{P}(B_v)}.
\]
Let $u$ be the unique vertex in $S$ that dominates $v$. The number of hyperedges that contain both $u$ and $v$ is $\binom{n-k-1}{d-2}$. Since $v$ is dominated by $u$ through exactly one hyperedge, precisely one of these hyperedges appears, which occurs with probability $\binom{n-k-1}{d-2}p(1-p)^{\binom{n-k-1}{d-2}-1}$. Moreover, $v$ shares no hyperedge with any other $u'\in S\backslash\{u\}$. The probability that none of the hyperedges connecting $v$ and any such $v'$ appears is $(1-p)^{(k-1)\binom{n-k-1}{d-2}}$. Finally, there must be no hyperedge that contains $v$ and at least two vertices from $S$. The total number of hyperedges containing $v$ is $\binom{n-1}{d-1}$. Among these, $\binom{n-1-k}{d-1}$ hyperedges  contain no vertex from $S$, and $k\binom{n-1-k}{d-2}$ hyperedges contain $v$ and exactly one vertex from $S$.  Hence, the number of hyperedges that contain $v$ and at least two vertices from $S$ is
$$M-k\binom{n-1-k}{d-2}.$$

The probability that none of these hyperedges appears is
$$(1-p)^{M-k\binom{n-1-k}{d-2}}.$$
Then
\begin{align*}
\Pr(A_v)=k\binom{n-1-k}{d-2}p(1-p)^{M-1}.
\end{align*}

Note that $(1-p)^M\approx\frac{\ln^2n}n$, $M\approx\frac{k}{(d-2)!}n^{d-2}$, and $p\approx\frac{(d-2)!}{n^{d-2}}.$ Therefore, $$\Pr(A_v)\approx\frac{\ln^3n}{n}.$$
We already know that 
\[\Pr(B_v)=1-(1-p)^{M}.\]
Consequently,
\[
\Pr\!\left(\exists\,v\in V\setminus(V_H\cup S): A_v\right)
=1-\left(1-\frac{\Pr(A_v)}{\Pr(B_v)}\right)^{n-n^c-k}
=1-o(1).
\]

Thus, with high probability there exist a vertex $v\in V\setminus(V_H\cup S)$ such that $v$ is dominated by a unique vertex $u\in S$ through exactly one hyperedge $e_{v,u}=(v,u,...)$. 
Choose a  vertex $v'\in V\setminus(V_H\cup S)$ and $u'\in S$ such that $e_{v',u'}=(v',u',...)$ is a hyperedge connecting them.

We now perform a \emph{symmetry mapping} by replacing the hyperedges $e_{v,u}$ and $e_{v',u'}$ by $e_{v,v'}=(v,v',...)$ and $e_{u,u'}=(u,u',...)$, as illustrated in Figure \ref{map} (transition from (a) to (b)). This operation modifies only hyperedges incident to vertices outside $H$, while preserving the degree of every vertex and keeping the total number of hyperedges unchanged. As a result of this transformation, vertex $v$ is no longer dominated by $S$, thus $S$ ceases to be a dominating set.

Moreover, with high probability no new dominating set of size $k$ is created.
Indeed, the probability that $\{v\}$ (or $\{v'\}$) extends to a
dominating set of size $k$ is at most
\[
\binom{n-n^c}{k-1}\bigl(1-(1-p)^{M'}\bigr)^{n-n^c-k-1}
=o(1),
\]
where $M'=\binom{n-1-n^c}{d-1}-\binom{n-k-1-n^c}{d-1}$.

Thus, with high probability, the modified graph contains no dominating set of
size $k$.
\begin{figure}[h]
  \centering
  \includegraphics[width=0.8\columnwidth]{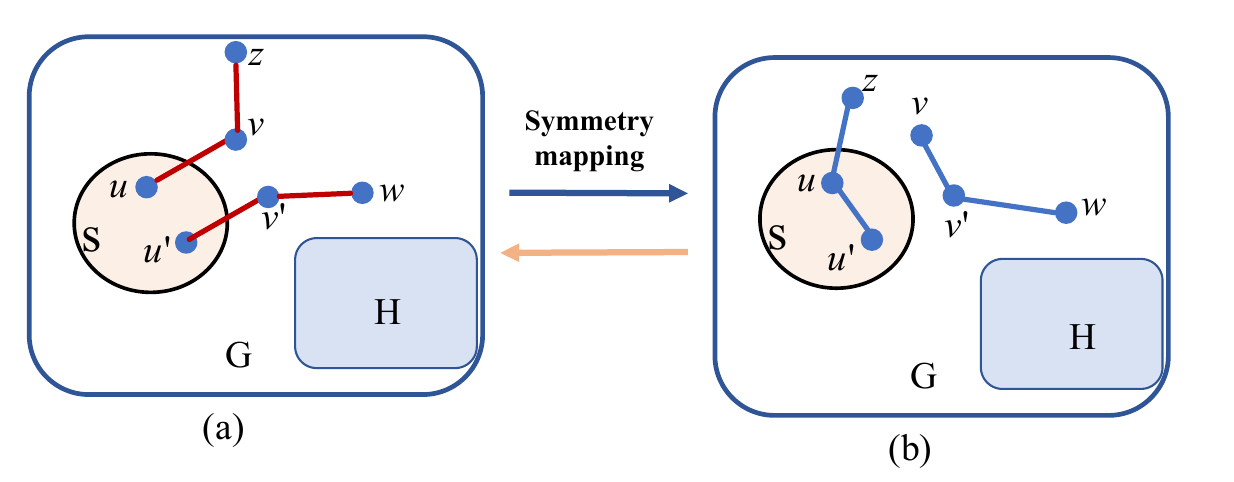}
\caption{\footnotesize{
A symmetry mapping between two classes of instances (for simiplicity, take $d=3$). \\
  (a) → (b) Initially, the vertex $v$ is dominated by exactly one vertex $u\in S$ through exactly one hyperedge. The symmetry mapping replaces the two hyperedges $e_{u,v}=(u,v,z),e_{u',v'}=(u',v',w)$  with $e_{u,u'}=(u,u',z),e_{v,v'}=(v,v',w)$. Consequently, after this transformation, $G$ has no dominating set of size $k$.\\
(b) → (a) Initially,  $v$ is the only vertex not dominated by $S$. The symmetry mapping transforms $G$ which has no dominating set of size $k$ into a graph with a unique dominating set of size $k$.
}\label{map}}
\end{figure}
\medskip

\noindent\textbf{Case 2: $G$ has no dominating set of size $k$.}

Now suppose that $G\in\mathscr{G}$ has no dominating set of size $k$.
By Lemma~\ref{lem:noset}, with high probability there exists a quasi-dominating set
$S$ ($|S|=k$) dominating all but one vertex. Let $v$ denote the unique undominated vertex.

Let $H$ be any induced subgraph of order at most $n^c$. Then
\[
\mathbf{P}(S\cup\{v\}\subseteq V\setminus V_H)
=\frac{\binom{n-n^c}{k+1}}{\binom{n}{k+1}}
=1-o(1),
\]
so with high probability all vertices in $S\cup\{v\}$ lie outside $H$.

Choose distinct vertices $u,u'\in S$ and $v'\in V\setminus(V_H\cup S)$ such that there exist hyperedges $e_{u,u'}=(u,u',...)$ and $e_{v,v'}=(v,v',...)$. Apply the symmetry mapping in the opposite direction by
replacing the hyperedges $e_{u,u'},e_{v,v'}$ with $e_{v,u}=(v,u,...)$ and $e_{v',u'}=(v',u')$. After this transformation, vertex $v$ becomes
dominated by $S$, and hence $S$ becomes a dominating set of size $k$. See Figure \ref{map}(from (b) to (a)).

As before, with high probability no other dominating set of size $k$ is created.
Therefore, with high probability, the modified graph contains a unique dominating
set of size $k$.

\medskip

In both cases, by altering only hyperedges whose vertices lie outside $H$, we can
flip the existence of a dominating set of size $k$ while keeping the induced
subgraph $H$ unchanged.  This yields self-referential instances for the dominating set problem. 
Consequently, for any $0<c<1$, no subgraph of order
$n^c$ contains sufficient information to determine whether $G_d(n,p)$ has a
dominating set of size $k$.

This proves that the dominating set problem for $G_d(n,p)$ is irreducible, and
completes the proof of Theorem~\ref{th:main}.

\section{Conclusions}

Solution independence  is the fundamental reason why exhaustive search becomes unavoidable. 
Self-reference and diagonalization provide a natural and powerful framework for establishing the necessity of exhaustive search. Indeed, Cantor constructed the classical diagonalization method to show that the set of real numbers has strictly larger cardinality than that of the rationals. The essence of this argument lies in the independence of coordinates in real number representations, which enables the construction of a new element that differs from every sequence in at least one position. This illustrates that solution independence serves as the structural foundation for constructing self-referential instances. Furthermore, by leveraging self-reference and diagonalization, one can rigorously prove that such self-referential instances are inherently indistinguishable by non-exhaustive algorithms. This logical chain establishes a theoretical framework for proving extreme hardness and characterizing the distinguishability boundary of non-exhaustive algorithms.

\appendix

\section{The existence of quasi-dominating sets}

\begin{proof}
Let $S$ be a $k$-vertex set, and define $$N=\sum_{|S|=k}I_S,$$ where $I_S=\mathbf{1}_{\{\text{$S$ is a quasi-dominating set}\}}$. For a fixed vertex $v\in V\setminus S$, we already know that the probability that $v$ is not dominated by $S$ is
\begin{align*}
q_0\triangleq(1-p)^M.
\end{align*}
The remaining $n-k-1$ vertices in $ V\setminus S\cup\{v\}$ must all be dominated by $S$.
A direct calculation gives
\begin{align}\label{eq:first2}
\mathbf{E}[N]=\binom{n}{k}(n-k)(1-p)^M\big(1-(1-p)^M\big)^{n-k-1}=\binom{n}{k}(n-k)q_0\big(1-q_0\big)^{n-k-1}.
\end{align}

Note that
\[
\frac{\mathbf{E}[N]}{\mathbf{E}[X]}
=(n-k)\frac{(1-p)^M}{1-(1-p)^M},
\]
and under our choice
\[
(1-p)^k\approx\frac{\ln^2 n}{n},
\]
we obtain
\[
\mathbf{E}[N]\approx(\ln n)^2\mathbf{E}[X]\to\infty.
\]
Thus the expected number of quasi-dominating sets tends to infinity. To prove the existence of quasi-dominating sets with high probability, we apply the second moment method. Let $S_1,S_2$ be two $k$-vertex set with $i=|S_1\cap S_2|$. Then
\[
\mathbf{E}[N^2]
=\sum_{|S_1|=k}\sum_{|S_2|=k}\mathbf{E}[I_{S_1} I_{S_2}]=\sum_{i=0}^k \Phi(i)W(i),
\]
where
\begin{align*}
\Phi(i)&=\binom{n}{k}\binom{k}{i}\binom{n-k}{k-i},\\
W(i)&=\mathbf{E}\left[I_{S_1} I_{S_2}\bigm| |S_1\cap S_2|=i\right]
= P_1(i)+P_2(i)+P_3(i)+2P_4(i),
\end{align*}
and 
\begin{align*}
P_1(i)&=m_iq_{00}(1-q_0)^{m_i-1}(1-q_0)^{2(k-i)}.\\
P_2(i)&=m_i(m_i-1)(q_0-q_{00})^2q_{11}^{m_i-2}(1-q_0)^{2(k-i)},\\
P_3(i)&=(k-i)^2q_0^2q_{11}^{m_i}(1-q_0)^{2(k-i)-2},\\
P_4(i)&=(k-i)m_iq_0(q_0-q_{00})q_{11}^{m_i-1}(1-q_0)^{2(k-i)-1}.
\end{align*}
with
$q_0\triangleq(1-p)^M,\ q_{00}\triangleq(1-p)^{M_i},\ q_{11}\triangleq1-2q_0+q_{00},\  m_i\triangleq n-2k+i.$

\medskip
Note that $$W(i)=\mathbf{P}\big(I_{S_1}=1,I_{S_2}=1\bigm| |S_1\cap S_2|=i\big),$$ where $I_{S_1}=1$ (respectively $I_{S_2}=1)$ denotes the event that $S_1$ (resp. $S_2$) dominates all but one vertex.
Define the disjoint vertex sets 
\begin{align*}
A=S_1\backslash S_2,B=S_2\backslash S_1,C=S_1\cap S_2,R=V\backslash(S_1\cup S_2).
\end{align*}
Then
\begin{align*}
|A|=|B|=k-i,|C|=i,|R|= n-2k+i.
\end{align*}
The events $I_S=1$ and  $I_{S'}=1$ can equivalently restated as follows:
\begin{itemize}
\item $I_{S_1}=1$: Among the vertices in $B\cup R$, exactly one vertex has no neighbor  in $S_1$;
\item $I_{S_2}=1$: Among the vertices in $A\cup R$, exactly one has no neighbor in $S_2$.
\end{itemize}

Let $x\in B\cup R$ be  the unique vertex not dominated by $S_1$, and $y\in A\cup R$ be the unique vertex not dominated by $S'$. We consider all possible locations of the pair $(x,y)$.

The probability that a vertex $u$ is not dominated by $S_1$ (or $S_2$) is $(1-p)^M$. The probability that  $u$ is not dominated by both $S_1$ and $S_2$ is $(1-p)^{M_i}$, since $M_i=\binom{n-1}{d-1}-\binom{n-1-2k+i}{d-1}$ is the number of hyperedges involving $S_1\cup S_2$. Let $E_1,E_2$ be the event that a vertex is dominated by $S_1,S_2$, respectively. Then
\[\Pr(\overline{E_1})=\Pr(\overline{E_2})=(1-p)^M,\quad\Pr( \overline{E_1}\cap\overline{E_2})=(1-p)^{M_i}.\]
Thus
\begin{align*}
 \Pr(u\text{ is not dominated by }S_1,u\text{ is dominated by }S_2) &=\Pr(\overline{E_1}\cap E_2)\\&=\Pr(\overline{E_1})-\Pr( \overline{E_1}\cap\overline{E_2})\\
 &=q_0-q_{00}.
 \end{align*}
and
 \begin{align*}
 \Pr(u\text{ is dominated by both }S_1\text{ and }S_2)
 =&\Pr(E_1\cap E_2)\\=&1-\Pr(\overline{E_1})-\Pr(\overline{E_2})+\Pr( \overline{E_1}\cap\overline{E_2})\\
 =&1-2q_0+q_{00}.
 \end{align*}

There are five possibilities for the locations of the pair $(x,y)$.

\textbf{Case 1: $x=y\in R$}.
There are $m_i$ choices for $x$. The vertex $x$ is not dominated by $S_1$ or $S_2$, which happens with probability $q_{00}$. Every $u\in R\backslash\{x\}$ must be dominated by both $S_1$ and $S_2$, with probability $q_{11}$. Every vertex $u\in A$ must be dominated by $S_2$, with probability $1-q_0$; similarly, every $u\in B$ should be dominated by $S_1$, also with probability $1-q_0$. Thus 

$$P_1(i)=m_iq_{00}q_1^{r-1}(1-q_0)^{2(k-i)}.$$

\textbf{Case 2: $x,y\in R,$ $x\ne y$}.
There are $m_i(m_i-1)$ choices for the ordered pair $(x,y)$. Note that $x$ is not dominated by $S_1$ but dominated by $S_2$, which happens with probability $q_0-q_{00}$. Similarly, $y$ is not dominated by $S_1$ but dominated by $S_2$, also with probability $q_0-q_{00}$. Every  $u\in R\backslash\{x,y\}$ must be dominated by both $S_1$ and $S_2$, with probability $q_{11}$. Every vertex $u\in A$ must be dominated by $S_2$, and every vertex $u\in B$ should be dominated by $S_1$, each with probability $1-q_0$. Hence

$$P_2(i)=m_i(m_i-1)(q_0-q_{00})^2q_{11}^{m_i-2}(1-q_0)^{2(k-i)}.$$

\textbf{Case 3: $x\in B,y\in A$}.
There are $k-i$ choices for $x$ and $k-i$ choices for $y$. Vertex $x$ is not dominated by $S_1$ with probability $q_0$; vertex $y$ is not dominated by $S_2$ with probability $q_0$. All vertices in $R$ are dominated by both $S_1$ and $S_2$, with probability $q_{11}^{m_i}$. Vertices in $A\backslash\{y\}$ must be dominated by $S_2$, and vertices in $B\backslash\{x\}$ must be dominated by $S_1$, each with probability is $(1-q_0)^{k-i-1}$.
We obtain

$$P_3(i)=(k-i)^2q_0^2q_{11}^{m_i}(1-q_0)^{2(k-i)}.$$

\textbf{Case 4: $x\in B,y\in R$}. (the symmetric case $x\in R,y\in A$ yields the same contribution).
 
 There are $k-i$ choices for $x$ and $m_i$ choices for $y$. Vertex $x$ is not dominated by $S_1$ with probability $q_0$; vertex $y$ is not dominated by $S_2$ but dominated by $S_1$, with probability $q_0-q_{00}$.
 Vertices in $R\backslash\{y\}$ are dominated by both $S_1$ and $S_2$, with probability $q_{11}^{m_i-1}$. Vertices in $A$ must be dominated by $S_2$, with probability $(1-q_0)^{k-i}$. Vertices in $B\backslash\{x\}$ must be dominated by $S_1$, with probability $(1-q_0)^{k-i-1}$.
Hence

$$P_4(i)=(k-i)m_iq_0(q_0-q_{00})q_{11}^{m_i-1}(1-q_0)^{2(k-i)-1}.$$

Combining the above cases, we have
$$W(i)=P_1(i)+P_2(i)+P_3(i)+2P_4(i).$$

\medskip
\medskip

Next, we estimate $\mathbf{E}[N^2]/\mathbf{E}[N]^2$.

\medskip
(1) If $i=0$, then $q_0\approx\frac{\ln^2n}{n-k}$, \ $q_{00}\approx q_0^2,\ q_{11}=1-o(1)$. Then
\begin{align*}
P_1(0)&=(n-2k)q_{00}(1-q_0)^{n-1}.\\
P_2(0)&=(n-2k)(n-2k-1)(q_0-q_{00})^2q_{11}^{n-2k-2}(1-q_0)^{2k},\\
P_3(0)&=k^2q_0^2q_{11}^{n-2k}(1-q_0)^{2k-2},\\
P_4(0)&=k(n-2k)q_0(q_0-q_{00})q_{11}^{n-2k-1}(1-q_0)^{2k-1}.
\end{align*}
A direct computation yields
\begin{align*}
\frac{W(0)}{(n-k)^2q_0^{2}(1-q_0)^{2n-2k-2}}=1+o(1).
\end{align*}
Note that
\begin{align*}
\Phi(0)=\binom{n}{k}\binom{n-k}{k}=(1+o(1))\binom{n}{k}^2.
\end{align*}

Hence
\[
\frac{\Phi(0)W(0)}{\mathbf{E}[N]^2}=1+o(1).
\]

\medskip
(2) If $i=k$, then the two quasi-dominating sets coincide, thus
\begin{align*}
\Phi(k)=\binom{n}{k},\quad
W(k)=(n-k)q_0(1-q_0)^{n-k-1},
\end{align*}
thus $\Phi(k)W(k)=\mathbf{E}[N]$.
Therefore,
\[
\frac{\Phi(k)W(k)}{\mathbf{E}[N]^2}
=\frac{1}{\mathbf{E}[N]}
=o(1).
\]

\medskip
(3) If $1\le i\le k-1$, it is easy to see that

\[
\frac{\Phi(i)}{\binom{n}{k}^2}\le\frac{k^{2i}}{n^i}.
\]

Moreover, note that $k=\ln n$. Standard asymptotic estimates give $$\frac{M}{M_i}\approx\frac{k}{2k-i},\quad q_0=(1-p)^M\approx \frac{\ln^2n}{n},$$ and consequently $$(1-p)^{M_i-2M}\approx \left(\frac{\ln^2n}n\right)^{i/k}.$$

\medskip

Therefore,
\begin{align*}
\frac{P_1(i)}{(n-k)^2q_0^2(1-q_0)^{2n-2k-2}}&=\frac{n-2k+i}{(n-k)^2}(1-p)^{M_i-2M}(1-q_0)^{-n+2k-i-1}\\
&=\Theta\left(\frac{1}{n}\right)\left(\frac{\ln^2n}n\right)^{i/k}\exp\{-\ln^2n\}.\\
\frac{P_2(i)}{(n-k)^2q_0^2(1-q_0)^{2n-2k-2}}&=\frac{(n-2k+i)(n-2k+i-1)}{(n-k)^2}\frac{(q_0-q_{00})^2}{q_0^2}\frac{(1-2q_0+q_{00})^{n-2k+i-2}}{(1-q_0)^{2n-4k+2i-2}}\\
&=\Theta\left(\exp\left\{\frac{(\ln^2n)^{2-\frac ik}}{n^{1-\frac ik}}\right\}\right).\\
\frac{P_3(i)}{(n-k)^2q_0^2(1-q_0)^{2n-2k-2}}&=\frac{(k-i)^2}{(n-k)^2}\frac{(1-2q_0+q_{00})^{n-2k+i}}{(1-q_0)^{2n-4k+2i}}\\
&=\Theta\left(\frac{k^2}{n^2}\exp\left\{\frac{(\ln^2n)^{2-\frac ik}}{n^{1-\frac ik}}\right\}\right).\\
\frac{P_4(i)}{(n-k)^2q_0^2(1-q_0)^{2n-2k-2}}&=\frac{(k-i)(n-2k+i)}{(n-k)^2}\left(1-\frac{q_{00}}{q_0}\right)\frac{(1-2q_0+q_{00})^{n-2k+i-1}}{(1-q_0)^{2n-4k+2i-1}}\\
&=\Theta\left(\frac kn\exp\left\{\frac{(\ln^2n)^{2-\frac ik}}{n^{1-\frac ik}}\right\}\right).
\end{align*}
Note that $k=\ln n$. Hence
\begin{align*}
\sum_{i=1}^{k-1}\frac{\Phi(i)W(i)}{\mathbf{E}^2[N]}=o(1).
\end{align*}
Therefore,
\begin{align*}
\frac{\mathbf{E}[N^2]}{\mathbf{E}^2[N]}\le1+o(1).
\end{align*}
The second moment method yields 
\begin{align*}
\mathbf{P}[N>0]\ge\frac{\mathbf{E}^2[N]}{\mathbf{E}[N^2]}\ge1-o(1).
\end{align*}
Thus with high probability there exists a quasi-dominating set that dominates all but exactly one vertex. 
\end{proof}

\end{document}